\begin{document}
\newcommand{\ket}[1]{\ensuremath{\left|#1\right\rangle}}
\newcommand{\bra}[1]{\ensuremath{\left\langle#1\right|}}
\newtheorem{definition}{Definition} 
\newtheorem{theorem}{Theorem} 

\title{Proposal for Quantum Rational Secret Sharing}
\author{Arpita Maitra$^1$, Sourya Joyee De$^2$, Goutam Paul$^2$ and Asim K. Pal$^1$}
\affiliation{$^1$Management Information Systems Group,\\
Indian Institute of Management Calcutta, India.\\
Email: \{arpitam, asim\}@iimcal.ac.in\\
$^2$ Cryptology \& Security Research Unit,\\
R. C. Bose Centre for Cryptology \& Security,\\
Indian Statistical Institute, Kolkata,\\
Email: sjoyeede@gmail.com, goutam.paul@isical.ac.in
}


\begin{abstract}
A rational secret sharing scheme is a game in which each party responsible for reconstructing a secret tries to maximize his utility by obtaining the secret alone. Quantum secret sharing schemes, either derived from quantum teleportation or from quantum error correcting code, do not succeed when we assume rational participants. This is because all existing quantum secret sharing schemes consider that the secret is reconstructed by a party chosen by the dealer. In this paper, for the first time, we propose a quantum secret sharing scheme which is resistant to rational parties. The proposed scheme is fair (everyone gets the secret), correct and achieves strict Nash equilibrium.
\end{abstract}
\maketitle

\section{Introduction and Motivation}

Secret sharing is an important primitive in cryptography. It can be considered as a special case of secure multiparty computation~\cite{Gordon,AL11,GroceK} 
which has applications in electronic voting, cloud computing, online auction etc. Recently, significant effort has been given towards bridging the gap between two apparently unrelated domains, namely, cryptography and game theory~\cite{AL11,GroceK,Katz12}. Cryptography deals with the `worst case' scenario making the protocols secure against malicious behavior of a party. However, in game theoretic perspective, a protocol is designed against the rational deviation of a party.

In rational domain there is no concept of trust.
Rational players are classified as neither `good' nor `bad'. They participate in the game with a motivation to maximize their utility. In cryptography, one may consider this as a special type of attack vector. However, this does not impose any special condition on adversary, it rather adds more flexibility to the adversary.

In~\cite{Grabbe} it was commented that quantum secret sharing can be treated as a game between the legitimate parties. Very recently, Brunner and Linden~\cite{BL} showed a deep link between quantum physics and game theory.  By bringing quantum mechanics into the game, they showed that players who can use quantum resources, such as entangled quantum particles, can outperform classical players. This is because of the fact that, in classical domain, the security depends on some computational hardness and thus is conditional. On the other hand, in quantum domain, the security comes from the laws of physics and thus is unconditional. In this paper, for the first time, we introduce the rationality concept of game theory in quantum secret sharing.

A $(t,n)$ or $t$-out-of-$n$ threshold secret sharing scheme~\cite{Shamir,Blakely} comprises the 
distribution of shares of a secret $s$ among $n$ players $P_1, \ldots, P_n$, 
such that at least $t$ of these players must communicate their shares to each
other to reconstruct the secret. An example of such a secret sharing scheme is Shamir's scheme~\cite{Shamir} that uses the concept of polynomial interpolation for generation and 
distribution of shares of the secret by a dealer and subsequent 
reconstruction of the secret by the players. Players that are `good' or `honest'
cooperate to reconstruct the secret, while players that are `bad' or malicious 
do not cooperate \cite{HT}. So, for successful reconstruction of the secret, at most ($n-t$) players can be `bad'.

In classical threshold secret sharing, Halpern and Teague~\cite{HT} 
introduced the concept of rational players.
Each rational party wishes to learn the secret while allowing as few others as possible to learn the secret. Halpern and Teague~\cite{HT} showed that in the presence of rational players, Shamir's scheme fails. 
Specifically, no rational player has the incentive to send his share during secret reconstruction. From the viewpoint of each player $P_i$, either ($t-1$) other players send their shares or they do not. If they send, then $P_i$, even without sending his own share, can reconstruct the secret for himself without allowing these ($t-1$) players to reconstruct. If they do not send, then none of the players can reconstruct the secret. So from each player $P_i$'s point of view, not sending his share weakly dominates sending his share. Thus the Nash equilibrium
achieved in Shamir's secret sharing corresponds to the case when nobody sends
anything to each other. To mitigate this problem, the
authors of~\cite{HT} introduced the concept of~\textit{rational secret sharing} (RSS). Its application in secure multiparty computation is known as \textit{rational multi-party computation} or RMPC and has been an active area of research~\cite{ADGH,AL,GK,KN1,LT,FKN,OPRV,LS} in recent times. 

The idea of quantum secret sharing (QSS) of a single qubit was first due to 
Hillery et al.~\cite{HBB99} using three and four qubit GHZ states. Later, this 
process was investigated by Karlsson et al.~\cite{KK99} using three particle 
entanglement, Cleve et al.~\cite{C99} using a process similar to error 
correction and Zheng~\cite{Z06} using W state. The QSS of an arbitrary 
two-qubit state was proposed by Deng et al.~\cite{D05}
using two GHZ states.
QSS using cluster states was demonstrated by Nie~\cite{NS11},  
Panigrahi~\cite{PM11,PS11} and Han~\cite{H12}. Recently, two qubit QSS 
was discussed using arbitrary pure or mixed resource states~\cite{ZL11} and 
asymmetric multipartite state~\cite{Z12}. Note that in $t$-out-of-$n$
QSS, the dealer chooses to reveal the secret to a specific subset of
$t$ parties and not to any arbitrary subset of $t$ parties.

\subsection{QSS with Rational Adversaries}
In QSS, all the parties are `good' or `honest' as they have agreed to reconstruct the secret to the party (or parties) chosen by the dealer. However, if we impose rational behavior of the participants in QSS, it is quite natural for the last player, who generates the secret, to quit with the secret alone. Hence, 
the other players always prefer not to give their shares (either classical bits or quantum bits) and hence the traditional QSS scheme fails if the players 
behave rationally. Like classical case, one may consider this as a special type of attack vector in quantum secret sharing. In the context of quantum secret sharing, it is an important attack vector to consider. However, this does not impose any special condition on adversary, it rather empowers the adversary with
more flexibility. In this paper, for the first time, we propose a quantum secret sharing scheme that resists this kind of attack vector and forces the participants to send the shares, though they are rational in nature. We call this scheme a {\em quantum rational secret sharing} (QRSS) scheme.

In classical domain, the adversary that controls a player may be computationally bounded, but in quantum domain the adversary is always assumed to have unbounded computational power. Because of this, we assume a computationally unbounded adversary throughout the entire paper and modify the security notions in this direction.

\subsection{Security Issues}
In classical RSS protocols, two type of settings are considered. One is called {\em fail-stop} setting and the other is known as 
{\em Byzantine setting}. In fail-stop setting, a player may abort early in an attempt to obtain the secret alone but does not send false shares of the secret.  Whereas in Byzantine setting, a player can behave arbitrarily, i.e., he can abort early or can fabricate a false share.

For share generation, rational multiparty computation exploits the idea of 
Shamir's secret sharing, the security of which comes from the 
{\em interpolation theorem}~\cite{Shamir,TW88}. Thus, it does not depend on 
some unproven hypothesis on computational hardness. 
However, in~\cite{TW88} Tompa and Woll showed that in Shamir's scheme, any $(t-1)$ participants can fabricate false shares in the motivation to deceive the 
$t$-th participant to believe in a legal but incorrect secret. In other words, Shamir's basic scheme is not secure against Byzantine players.

One straightforward solution to this problem is to send signed  shares by the distributor (dealer) to the participants. Another approach is to use verifiable
secret sharing~\cite{Micali85}. Note that both these approaches
are based upon unproven assumptions such as the intractability of
integer factorization or the existence of secure encryption schemes.

Interestingly, Tompa and Woll proposed a scheme~\cite{TW88} 
that mitigates the hidden problem of cheating in Shamir's secret sharing without
any unproven assumption. They showed that the probability of undetected 
cheating can be made less than $\epsilon$, for any $\epsilon > 0$, 
by suitably choosing a large prime (that depends on $\epsilon$) 
as the modulus of the underlying field.

Thus, the security issues of classical RSS can be sub-divided into the following notions: 1) security of the underlying secret secret sharing, 2) security of a signature scheme, 3) security against rational players. The security issues of $1$ and $2$ have been discussed in a number of literatures~\cite{Shamir,TW88,Micali84}. This is why the works on rational secret sharing~\cite{HT,ADGH,AL,GK,KN1,LT,FKN,OPRV,LS} have concentrated only on the security of the rational 
part, that is formalized in terms of fairness, correctness and Nash equilibrium~\cite{AL}. In this paper, we follow the same approach in the quantum domain. 

In quantum domain,  we impose rationality issues on the top of the quantum secret sharing model which exploits quantum error correcting code. Thus, the security of the secret sharing part comes from the security of the quantum error correcting code, specifically CSS code~\cite{C99,S09,SP00}. Note that as we exploit the quantum error correcting code to encode the secret, no unauthorized party can extract any information by subverting one or more authorized parties~\cite{C99,S09}.

Further, the existing RSS literatures~\cite{ADGH,AL,FKN,KN1,LT,LS} deal with various flavours of Nash equilibrium. As we assume computationally unbounded adversary in the quantum domain, we consider strict Nash equilibrium here.

In classical RSS, the dealer signs each share so that no player can give out wrong shares to others. However, in the quantum setting the scenario is different. Typically, quantum signature schemes consider signing either classical
messages~\cite{gott01} or quantum message string with independent 
qubits~\cite{lu05}. In these works, there is no concept of entanglement among the distributed shares, whereas in our proposed scheme, the shares are entangled. It is not yet known how to sign such qubits which contain the information of the secret, as any type of measurement on that qubits will destroy the entanglement
and hence the information related to the secret. For this reason, we assume that a rational player in the quantum setting is fail-stop by nature, i.e., he may abort early towards the motivation to get the secret alone, but does not send false shares of the secret.

In quantum domain, it is very natural for a player to measure his share as soon as he gets it. However, in this work, we encode the secret by CSS code which takes care of arbitrary error. Thus measuring his qubit in an arbitrary basis gives no advantage to the player. Even with unconditional power of computation, the quantum adversary extracts no information about the secret. Moreover, if he measures the share, he will lose the information stored in the qubit. Thus no player has any incentive to measure his qubit(s) and each player communicates each
share as it is received from the dealer.

\section{Preliminaries}
In this section, we briefly describe classical rational secret sharing and discuss the concepts of rationality, fairness, correctness and equilibrium used in this work. We also extend these concepts in the quantum domain.
 
The dealer in a classical rational secret sharing (RSS) protocol is honest and can be online or offline. An online dealer remains available throughout the secret reconstruction protocol, whereas an offline dealer becomes unavailable after distributing the shares of the secret. Note that an online dealer is not very practical as he repeatedly interacts with the players and such a dealer can directly provide the secret to the players. In 2008, Kol and Naor~\cite{KN1} discussed rational secret sharing in the non-simultaneous channel model and in the presence of an offline dealer, in an information theoretic setting. Almost all the subsequent works~\cite{AL,OPRV,LS,FKN} on rational secret sharing assumed the dealer to be offline.

Rational secret sharing proceeds in two phases: 1) share generation and distribution and 2) secret reconstruction. 

\textit{Share generation and distribution:} If the dealer is online, then at the beginning of each round, he distributes to each player $P_i$ the share of the actual secret with probability $\gamma$ or that of a fake secret with probability $(1-\gamma)$. The value of $\gamma$ is kept secret from the parties and is dependent on the utility values of the parties ~\cite{HT,GK}. An offline dealer distributes to each party $P_i$ a list of shares, one of which is that of the actual secret $s$ and the remaining of fake secrets ~\cite{KN1,FKN,LS}. The position $r$ of this actual share in the lists is not revealed to the players and is chosen according to a geometric distribution $\mathcal{G}(\gamma)$, where the parameter $\gamma$ in turn depends on the utility values of players. The dealer generates shares using Shamir's secret sharing scheme.

\textit{Secret Reconstruction:} In the $j$th round of communication, each player $P_i$ (either simultaneously or non-simultaneously) broadcasts or sends individually to each of the other players (in presence of synchronous, point-to-point channels) the share $s_{ij}$ corresponding to that round. The shares are signed by the dealer. Hence, no player can give out false shares undetected and the only possible action of a player in a round is to either 1) send the message or 2) remain silent. The round in which the shares of the actual secret are revealed and hence the secret is reconstructed is called  revelation or definitive round. When the dealer is offline, players are made aware that they have crossed the revelation round by the reconstruction or exchange of an indicator (a bit in~\cite{KN1}, a signal in~\cite{FKN}). For simultaneous channel model, parties can identify a revelation round as soon as it occurs. However, for non-simultaneous channels, the indication is delayed till the subsequent round to avoid rushing strategy. In this case, the indicator cannot be reconstructed or interpreted by all the players. The player who communicates last during the reconstruction of the indicator is the first and only one to know that the last round was the revelation round. Once he comes to know this, he has no incentive to send his share of the indicator to the other players for reconstruction. Instead, he simply quits. The fact that this player quits signals to the other players that the secret has been reconstructed.

A $(t,n)$ rational secret reconstruction protocol is a pair $(\Gamma, \overrightarrow{\sigma})_{t,n}$, where $\Gamma$ is the game (i.e., specification of allowable actions) and $\overrightarrow{\sigma}$=$(\sigma_1,\ldots,\sigma_n)$ denotes
the strategies followed by the players. We use the notations
$\overrightarrow{\sigma}_{-i} = (\sigma_1,\ldots,\sigma_{i-1},\sigma_{i+1},\ldots,\sigma_n)$ and 
$(\sigma'_i,\overrightarrow{\sigma}_{-i}) = (\sigma_1,\ldots,\sigma_{i-1},\sigma'_i,\sigma_{i+1},\ldots,\sigma_n)$.
The outcome of the game is denoted by $\overrightarrow{o}((\Gamma, \overrightarrow{\sigma})_{t,n})$=$(o_1,\ldots,o_n)$. 
The outcomes of a secret reconstruction game $\Gamma$ with respect to a party $P_i$ are as follows: 1) $P_i$ obtains the secret while others do not; 2) everybody obtains the secret; 3) nobody obtains the secret 4) others obtain the secret while $P_i$ does not and 5) others believe in a fake secret while $P_i$ does not.
The output that no secret is obtained is denoted by $\perp$ and fake secret is denoted by any symbol $\notin \{s,\perp\}$.
 
\subsection{Utilities and Preferences}
The utility function $u_i$ of each party $P_i$ is defined over the set of possible outcomes of the game. The outcomes and corresponding utilities for $t=n=2$ are described in Table~\ref{table: OutcomesRSS}. 
For classical secret sharing, $u_i$ is 
assumed to be polynomial in the security parameter $k$ which is typically 
the size of the secret. Thus, $U_i^{TN}=u_i(1^k, (o_i=s, o_j=\perp))$, $U_i^{TT}=u_i(1^k, (o_i=s,o_j=s))$ (where $i\neq j$) and so on. 

\begin{table}[htbp]
\caption{Outcomes and Utilities for $(2,2)$ rational secret reconstruction}
\label{table: OutcomesRSS}
\begin{center}
\begin{tabular}{llll}
\hline\noalign{\smallskip}
$P_1$\rq s outcome & $P_2$\rq s outcome & $P_1$\rq s Utility & $P_2$\rq s Utility\\
$(o_1)$ & $(o_2)$ & $U_1(o_1, o_2)$ & $U_2(o_1, o_2)$\\
\hline
\noalign{\smallskip}
$o_1$=$s$ & $o_2$=$s$ & $U_1^{TT}$ & $U_2^{TT}$\\
$o_1$=$\perp$ & $o_2$=$\perp$ & $U_1^{NN}$ & $U_2^{NN}$\\
$o_1$=$s$ & $o_2$=$\perp$ & $U_1^{TN}$ & $U_2^{NT}$\\
$o_1$=$\perp$ & $o_2$=$s$ & $U_1^{NT}$ & $U_2^{TN}$\\
$o_1$=$\perp$ & $o_2\not\in {\{s, \perp\}}$ & $U_1^{NF}$ & $U_2^{FN} $\\
$o_1\not\in {\{s, \perp\}}$ & $o_2$=$\perp$ & $U_1^{FN} $ & $U_2^{NF}$\\
\hline
\end{tabular}
\end{center} 
\end{table}

For quantum domain, the secret is a state $\ket{\psi} = \alpha\ket{0} + \beta\ket{1}$, or in other words, a pair of complex numbers $(\alpha, \beta)$. 
Thus, the size of the secret is effectively infinite. Hence, the assumption
on the utilities as polynomial functions of the security parameter has no meaning. Rather, we treat the utilities as real numbers that depend on the output
values.

Players have their preferences based on the different possible outcomes. In this work, a rational player $i$ is assumed to have the following preference:
$$\mathcal{R}_1 : U_i^{TN}> U_i^{TT}>U_i^{NN}>U_i^{NT}.$$ 
Some players may have the additional preference $$U_i^{NF} \geq U_i^{TT},$$ whereas the rest have $$U_i^{NF} < U_i^{TT}.$$

For more than two players, the second superscript $Y$ in the notation 
$U_i^{XY}$ correspond to any of the other players (except $i$ itself).

\subsection{Fairness}
A rational player, being selfish, desires an unfair outcome, i.e., obtaining the secret alone. Therefore, the basic aim of rational secret sharing schemes has been to achieve fairness. A formal definition of fairness in the context of a
(2,2) RSS protocol was presented by Asharov and Lindell~\cite{AL}. We modify this definition for the $(t,n)$ quantum setting as follows:
\begin{definition}
(Fairness, adapted from \cite{AL}) A rational secret reconstruction mechanism $(\Gamma,\overrightarrow{\sigma})_{t,n}$ is said to be completely fair if for every arbitrary alternative strategy $\sigma'_i$ followed by party $P_i$, $(i\in{\{1,\ldots, n\}})$ the following holds:
$$
Pr[o_i (\Gamma,(\sigma'_i,\overrightarrow{\sigma}_{-i}))=s] < Pr[o_{-i} (\Gamma,(\sigma'_i,\overrightarrow{\sigma}_{-i}))=s].
$$
\end{definition}
In the above definition, the subscript $-i$ denotes all the players other than 
$i$.

Fairness can be achieved by a suitable randomized reconstruction of the protocol. The exact round in which the actual secret is to be revealed is not known to the parties. In Theorem~\ref{fair}, we show that the condition for fairness is
$$\gamma U_i^{TN}+(1-\gamma)U_i^{NN}<U_i^{TT}.$$
$$\mbox{Or, }\indent\gamma < \frac{U_i^{TT} - U_i^{NN}}{U_i^{TN} - U_i^{NN}}$$
for each $i$.
This is the same condition that is required in the classical scenario.

\subsection{Correctness}\label{correctness}
A formal definition of correctness in the context of a
(2,2) RSS protocol was presented by Asharov and Lindell~\cite{AL}. We modify this definition for the $(t,n)$ quantum setting as follows:
\begin{definition}
(Correctness) A rational secret reconstruction mechanism $(\Gamma,\overrightarrow{\sigma})$ is said to be correct if for every arbitrary alternative strategy $\sigma'_i$ followed by party $P_i$, $(i\in {\{1,\ldots,n\}})$ the following holds:
$$
Pr⁡[o_{-i} (\Gamma,(\sigma'_i,\overrightarrow{\sigma}_{-i}))\not\in \left\{{s,\perp}\right\} ]= 0
$$
\end{definition}
In classical rational secret sharing, the condition of correctness becomes significant in the non-simultaneous channel model. The rational party with preference $\mathcal{R}_1$ communicating last in any round may quit early in the protocol. Since other parties decide whether the revelation round has been reached depending on whether the last party has quit, they are easily misled into believing in a wrong value of the secret. 

\subsection{Equilibrium}
A rational secret reconstruction protocol should be such that no player has any incentive to deviate from this protocol. Consequently, Nash equilibrium and its several variants have been used as the equilibrium concept in the literature of rational secret sharing. A suggested strategy $\overrightarrow{\sigma}$ of a mechanism $(\Gamma,\overrightarrow{\sigma})$ is said to be in Nash equilibrium when there is no incentive for a player $P_i$ to deviate from the suggested strategy, given that everyone else is following this strategy. 

The concept of strict Nash equilibrium becomes useful when the payoffs from playing a \lq good\rq\ strategy and a \lq bad\rq\ strategy are so close that any minor changes in the beliefs of players about the strategy others are going to adopt may lead each of them to play the \lq bad\rq\ strategy \cite{KN1}. It is defined as follows:
\begin{definition}
(Strict Nash equilibrium) The suggested strategy $\overrightarrow{\sigma}$ in the mechanism $(\Gamma,\overrightarrow{\sigma})$ is a strict Nash equilibrium if for every $P_i$ and for any strategy $\sigma'_i$, we have $u_i (\sigma'_i,\overrightarrow{\sigma}_{-i} )<u_i (\overrightarrow{\sigma})$.
\end{definition}
There may exist several strategies which are the same as the suggested strategy $\overrightarrow{\sigma}_i$ for party $P_i$ except for minor differences such as performing some irrelevant computation or sending different messages after the protocol is over. For the sake of proving that a proposed protocol is in strict Nash equilibrium, we assume that all such strategies are essentially the same and do not constitute any deviation. 

\section{Quantum Rational Secret Sharing}
In this section we first present a $(3,7)$ quantum rational secret sharing (QRSS) protocol and we generalize it to the $(t,n)$ setting in the next section.

We do not exploit the ideas related to teleportation in
quantum secret sharing. The idea of teleportation does not naturally take
care of the situation when parties are rational. Rather, we use quantum error
correcting code. There exist some works~\cite{C05,rietjens05,Z06,S09} for building quantum secret sharing schemes using (classical or quantum) error correcting codes. However, none of these schemes addresses the rationality issue.  

An arbitrary pure single-qubit quantum state is given by 
$\ket{\psi} = \alpha\ket{0} + \beta\ket{1}$ with 
$|\alpha|^2 + |\beta|^2 = 1$, where $\alpha, \beta \in \mathbb{C}$. 
Quantum error correction scheme known as CSS code~\cite{nc04},
can be constructed from classical error correcting code. Let $C$ and $C_1$ be two classical linear codes such that $\{0\}\subset {C}_{1} \subset C \subset \mathbb {F}_{2}^{7}$ with the generator matrices 

$$ G =
\begin{pmatrix}
1&0&0&0&0&1&1\\
0&1&0&0&1&0&1\\
0&0&1&0&1&1&0\\
0&0&0&1&1&1&1
\end{pmatrix},
$$

$$ G_1 =
\begin{pmatrix}
0&0&0&1&1&1&1\\
0&1&1&0&0&1&1\\
1&0&1&0&1&0&1
\end{pmatrix}.
$$

From the above expression it is clear that  $C_1$  is the dual 
code~\cite{mac} of $C$. 
A quantum CSS code can be constructed from these two linear codes with code 
words $\ket{0}_L$ and $\ket {1}_L$. A pure single quantum state can be encoded with this code by attaching an ancilla state $\ket{0}$ and applying the CNOT gate. After inserting the ancilla state we get $\alpha\ket{00} + \beta \ket{10}$ which is converted to $\alpha\ket{00} + \beta\ket{11}$ after the application of the CNOT gate. $\ket {00}$ can be encoded by the above CSS code as $\ket{1}_L$  and $\ket{11}$ can be encoded as $\ket{0}_L$. Thus the entire state is encoded as
\begin{eqnarray*}
\frac{1}{\sqrt{8}}[[\alpha\ket{1111111}  + \ket{1010010} + \ket {1100100} +
\ket{1001001}\\
+  \ket{0000111}+ \ket{0101010} + \ket{0011100} + \ket{0110001}]\\
  +  \beta[\ket{0000000} + \ket{0101101} + \ket{0011011} + \ket{0110110} \\
+ \ket{1111000} + \ket{1010101} + \ket{1100011} + \ket{1001110}]].
\end{eqnarray*}

In light of the above discussion, let us now explain our exact proposal
and its importance. Here, we assume the secret as a single qubit $\alpha\ket{0} + \beta\ket{1}$. We encode the secret with the above CSS code. Thus the secret is now split into seven qubits. The dealer distributes these seven qubits among seven parties. We now write the secret as 
\begin{eqnarray*}
\frac{1}{\sqrt{8}}[\ket{1111}[\alpha\ket{111} +  \beta\ket{000}] 
+ \ket{1010}[\alpha\ket{010} + \beta\ket{101}] \\
+ \ket {1100}[\alpha\ket{100} + \beta\ket{011}] 
+  \ket{1001}[\alpha\ket{001} + \beta\ket{110}] \\
+  \ket{0000}[\alpha\ket{111} + \beta\ket{000}] 
+ \ket{0101}[\alpha\ket{010} + \beta\ket{101}] \\
+ \ket{0011}[\alpha\ket{100} + \beta\ket{011}] 
+ \ket{0110}[\alpha\ket{001}] + \beta\ket{110}]]. 
\end{eqnarray*}
Applying CNOT gate on last three qubits, we obtain 
\begin{eqnarray*}
\frac{1}{\sqrt{8}}[\ket{1111}[\alpha\ket{1} +   \beta\ket{0}]\ket{00}  + \ket{1010}[\alpha\ket{0} + \beta\ket{1}]\ket{10} \\
+ \ket {1100}[\alpha\ket{1} + \beta\ket{0}]\ket{11} +  \ket{1001}[\alpha\ket{0} + \beta\ket{1}]\ket{01} \\
+  \ket{0000}[\alpha\ket{1} + \beta\ket{0}]\ket{00} + \ket{0101}[\alpha\ket{0} + \beta\ket{1}]\ket{10} \\
+ \ket{0011}[\alpha\ket{1} + \beta\ket{0}]\ket{11} + \ket{0110}[\alpha\ket{0}] + \beta\ket{1}]\ket{01}].
\end{eqnarray*}

Thus if last three parties collaborate, then one can reconstruct the secret by
measuring the last two qubits in $\{00,01,10,11\}$  basis. If a party gets $\ket{00}$ or $\ket{11}$, he has to apply $X$ gate to obtain the secret. If a party gets $\ket{01}$ or $\ket{10}$, he has to apply $I$ gate to construct the secret. Close observation reveals that there are only seven combinations of three parties for which secret can be reconstructed. Denoting the position of the participants by integer values, we can write those combinations as  $\mathcal{A} = $$\{(5,6,7)$, $(1,2,5)$, $(2,4,6)$, $(1,3,6)$, $(1,4,7)$, $(2,3,7)$, $(3,4,7)\}$. The set $\mathcal{A}$ is called the access structure of the secret sharing scheme.

\subsection{(3,7) Quantum Rational Secret Sharing Protocol}
In classical rational secret sharing, an indicator is distributed along the shares of each secret to each party. The parties reconstruct the indicator and comes to know about the revelation round. However, in quantum domain, including an indicator is costly. We solve this problem by 
assuming that the dealer is semi-offline. In other words, the dealer interacts with the participants twice, 1) at the time of the share distribution, and 2) at the time when the game is over. In Section~\ref{offline}, we discuss how to make the dealer offline. Like the classical RSS protocols, our dealer is assumed to be honest. 

In classical simultaneous broadcasting channel, each party is supposed to broadcast his share in each round. So, in each round, each party obtains $(t-1)$ shares from others and thus reconstructs the secret. In the point-to-point channel model, instead of broadcasting his share, each party in each round individually communicates his share to every other party. This means that each party prepares $t$ copies of his or her share and distributes $(t-1)$ shares among $(t-1)$ parties retaining one share for himself. 
In quantum domain, due to the no cloning theorem~\cite{noclone}, a player cannot generate copies of his share. However, the dealer can prepare as many copies of the secret as required as he knows the secret. We exploit this idea to form our protocol. The communication in the quantum setting is similar to the communication in the point-to-point channel model. Unlike classical RSS,  each round is further sub divided into sub-rounds. In the $i$th sub-round of a round $j$ the participant $P_i$ is given the current shares (qubits) by the remaining players. For example, in $(3,7)$ quantum rational secret sharing, in the first sub-round $P_2$ and $P_3$ give their current shares (qubits) to $P_1$. In the second sub-round $P_1$ and $P_3$ give their current shares to $P_2$. In the third sub-round $P_1$ and $P_2$ give their current shares to $P_3$.

We assume that the players are of fail-stop nature. This means that they do not send wrong shares. In each round, a player has just two strategies, either to send his share or to remain silent. Remaining silent is equivalent to quit the game. Throughout the paper whereas we use the word ``quit", we want to mean that the player remains silent from the very sub-round of a round. The dealer is assumed to be honest.
We describe our protocol $\pi^{3,7}_{QRSS}$ below. Without loss of generality,
we assume that the dealer wishes to reveal the secret to the parties
receiving qubits 5, 6 and 7 and we label them as players $P_1$, $P_2$ and $P_3$
respectively.\\ \\
\textbf{\underline{[1. Protocol $\pi^{3,7}_{QRSS}$]}}\\
\textbf{1.1 The Dealer\rq s Protocol $\pi^{3,7}_{ShareGen}$:}\\
\textbf{Input.} The quantum secret to be shared using $(3,7)$ threshold secret sharing.\\
\textbf{\textit{1.1.1 Share Generation:}} The dealer does the following:
\begin{itemize}
\item Chooses the $r$ according to a geometric distribution $\mathcal{G}(\gamma)$ with parameter $\gamma$.  
\item Generates three copies of each secret (fake as well as actual) for each round and encodes those in CSS code discussed above.
\item Prepares a list $list_i$ of shares for each party $P_i$ such that:
\begin{itemize}
\item Qubits $(5,6,7)$ of each secret are given to players $P_1$, $P_2$, and $P_3$ respectively so that each party gets three qubits, i.e. $P_1$ possesses three of $5$, $P_2$ has three of $6$ and $P_3$ possesses three of $7$ for each round.
\item Each list contains $3(r+w)$ shares, where $w$ is also chosen according to $\mathcal{G}(\gamma)$. 
\end{itemize}
\end{itemize}
\textbf{Output.} The dealer distributes $list_i$ to party $P_i$.\\ \\
\textbf{\textit{1.1.2 Unmasking of Revelation Round $\pi^{3,7}_{Unmask}$:}}\\
\textbf{Input.} Signal $sig_i$ from each player $P_i$, $(i\in{1,2,3})$. 
\textbf{Computation and Communication.} The dealer does the following:
\begin{itemize}
\item If $sig_1=sig_2=sig_3=1$, announces the value of $r$ to each $P_i$.
\item If for at least one value of $i$, $sig_i\neq1$, aborts after announcing \textit{abort} to each party $P_i$.
\end{itemize}
\textbf{Output.} The dealer outputs either $r$ or \textit{abort} depending on the values of $sig_i$.\\ \\
\textbf{1.2 The Player\rq s Protocol}\\
\textbf{Input.} The list of shares $list_i$.\\
\textit{\textbf{1.2.1 Secret Reconstruction $\pi^{3,7}_{Recon}$:}}\\
\textbf{Computation and Communication.}
Each player $P_i$ does the following:
\begin{itemize}
\item In the $i$th sub-round of a round $j$, $P_i$ is given the current shares (qubits) by other two parties. 
\item Checks to see if the number of shares received is less than two. If yes, aborts and sends $sig_i=0$ to dealer. Else, continues.
\item At the end of round $j$, does the following:
\begin{itemize}
\item Applies CNOT gate considering his current share (qubit) as control. 
\item Measures target bits in $\{00, 01,10, 11\}$ basis. 
\item Depending on the measured value operates either $X$ gate or $I$ gate. 
\end{itemize}
\item Stores the secret $s_j$ obtained in the $j$th round. 
\item If $j= {\frac{1}{3}}|list_i|$, sends $sig_i=1$ to the dealer. If the dealer sends \textit{abort}, then aborts; else if the dealer sends the value of $r$ stores only $s_r$ and quits. If $j<{\frac{1}{3}}|list_i|$, continues.
\end{itemize}
\textbf{Output.} The quantum secret $s_r$.

\section{Generalization to ($t,n$) Quantum Rational Secret Sharing}
Before going to the ($t,n$) quantum rational secret sharing, first we show an existential result.

\begin{theorem}
 Let  $C = [n,k,d]$ and $C_{1} = [n,k-1,d^{\prime}]$ be two linear codes such that $C_{1} \subset {C}$. Given  $C\cdot C_{1}^{T} = 0$, there is always a secret sharing scheme provided $k$ is a power of 2.
\end{theorem}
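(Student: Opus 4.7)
The plan is to extend the explicit $(3,7)$ construction of Section~III to the general $(C, C_1)$ setting. The CSS prerequisites are in place: since $C_1 \subset C$ and $C \cdot C_1^T = 0$, we have $C_1 \subseteq C \cap C^\perp$, the standard self-orthogonality condition for CSS, so $\mathrm{CSS}(C, C_1)$ is a well-defined quantum code. Because $\dim C - \dim C_1 = k - (k-1) = 1$, it encodes exactly one logical qubit into $n$ physical qubits, with logical states $\ket{0}_L = |C_1|^{-1/2} \sum_{x \in C_1} \ket{x}$ and $\ket{1}_L = |C_1|^{-1/2} \sum_{x \in v + C_1} \ket{x}$ for any $v \in C \setminus C_1$, so the single-qubit secret $\alpha \ket{0} + \beta \ket{1}$ embeds as $\alpha \ket{0}_L + \beta \ket{1}_L$.

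Next, the encoding would proceed by attaching $m = \log_2 k$ ancillas prepared in $\ket{0}^{\otimes m}$ and applying a CNOT network dictated by the generator matrix $G$ of $C$, exactly as in the $(3,7)$ case where one ancilla and one CNOT were used. This is where the hypothesis that $k$ is a power of $2$ is consumed: it ensures the pre-CSS register is a qubit register of integer size $k$, so the CNOT circuit implementing the $k \to n$ generator action factors through standard two-level gates. After the CNOTs and the CSS embedding, one obtains an $n$-qubit entangled state whose expansion over codewords of $C$ groups naturally into a coset-label register for $C/C_1$ and a payload register carrying the amplitudes $\alpha$ and $\beta$, exactly mirroring the rewrite performed just before Section~III.A. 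The $n$ qubits are then distributed one per party.

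Finally, derive the access structure. For any $v \in C \setminus C_1$, the logical $X$ operator $\prod_{i \in \mathrm{supp}(v)} X_i$ flips $\ket{0}_L$ and $\ket{1}_L$, so the parties holding the qubits in $\mathrm{supp}(v)$ can, via a CNOT-and-measure procedure generalizing the $(3,7)$ decoding, locally recover $\alpha \ket{0} + \beta \ket{1}$ up to a Pauli correction indicated by the measurement outcomes. Taking the upward closure of $\{\mathrm{supp}(v) : v \in C \setminus C_1\}$ yields a non-empty access structure $\mathcal{A}$ (non-empty because $C \neq C_1$), which defines a genuine quantum secret sharing scheme. The main obstacle I anticipate is showing that the CNOT-and-measure decoding can always be executed using only the qubits inside $\mathrm{supp}(v)$, i.e., that the support of each logical $X$ gives a self-contained local reconstruction circuit; this requires analysing how the coset structure of $C/C_1$ interacts with the CNOT-based encoding, and in particular clarifying why the power-of-$2$ condition on $k$ is what lets the encoding circuit restrict cleanly to authorized subsets. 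The CSS machinery itself only guarantees correctness of encoding; the combinatorial compatibility of encoding and local decoding is the heart of the argument.
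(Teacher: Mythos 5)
Your proposal follows essentially the same route as the paper: construct $\mathrm{CSS}(C,C_1)$ from the hypotheses, attach ancillas and apply CNOTs so the secret becomes $\alpha\ket{0}_L+\beta\ket{1}_L$ (up to relabelling), with the power-of-two condition on $k$ arising exactly as you identify it, from requiring the pre-encoding message register to be an integer number $m=\log_2 k$ of qubits. The access-structure and local-decoding issues you flag as the remaining obstacle are not part of the paper's proof of this statement; they are deferred to the subsequent theorem, which argues that the $d$ positions where two coset representatives differ suffice for the CNOT-and-measure reconstruction.
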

\begin{proof}
From the definition it is clear that we always construct a CSS code from given $C$ and $C_1$ with codewords $\ket{0}_L$ and $\ket{1}_L$. Let the secret be $\alpha \ket{0} + \beta \ket{1}$. Let us take the tensor product of the secret and $(m-1)$ number of the ancilla states, where $m$ is any integer value. The final state becomes $\alpha \ket{0_{1} 0_{2} \dots 0_{m}} + \beta \ket{1_{1} 0_{2} \dots 0_{m}}$. 

Applying CNOT gate we obtain $\alpha \ket{0_{1} 0_{2} \dots 0_{m}} + \beta \ket{1_{1} 1_{2} \dots 1_{m}}$. $ \ket{0_{1} 0_{2} \dots 0_{m}}$ can be written in matrix form by a $2^{m}$ binary vector, $ (1 0 0 \dots 0 0)$. Similarly $\ket{1_{1} 1_{2} \dots 1_{m}}$ can be represented by a $2^{m}$ binary vector, $(0 0 0 \dots 0 1)$. In secret sharing scheme $ \ket{0_{1} 0_{2} \dots 0_{m}}$ and $ \ket{1_{1} 1_{2} \dots 1_{m}}$ are message states. Thus $k = 2^{m}$ or $m = \log_{2} k$. As $m$ is an integer, $k$ should be power of 2.    
 The encoded secret in CSS code becomes $\alpha \ket{1}_L + \beta \ket{0}_L$.
\end{proof}

The next result gives a bound on the number of parties that can reconstruct 
the secret.
\begin{theorem}\label{vmezofelint}
 Let  $C = [n,k,d]$ and $C_{1} = [n,k-1, d^{\prime}]$ be two linear codes such that $C_{1} \subset {C}$. Given  $C\cdot C_{1}^{T} = 0$, there  are minimum $d$ number of parties who can reconstruct the secret.
\end{theorem}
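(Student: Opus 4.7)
The plan is to establish the bound in two directions: first that some $d$-subset can reconstruct the secret, and second that no subset of size less than $d$ can.

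For the achievability direction I would generalise the explicit construction used in the $(3,7)$ example. Since $d$ is the minimum distance of $C$, there is a weight-$d$ codeword of $C$; by translating by an element of $C_1$ if necessary, I may choose such a codeword $v \in C \setminus C_1$, representing the non-trivial coset defining $\ket{1}_L$. Let $S = \mathrm{supp}(v)$, so $|S| = d$. Using $\ket{0}_L = \tfrac{1}{\sqrt{|C_1|}}\sum_{x \in C_1}\ket{x}$ and $\ket{1}_L = \tfrac{1}{\sqrt{|C_1|}}\sum_{x \in C_1}\ket{x+v}$, the encoded state $\alpha\ket{0}_L + \beta\ket{1}_L$ regroups as
\[
\frac{1}{\sqrt{|C_1|}}\sum_a \ket{a}_{\bar S}\otimes\bigl(\alpha\ket{x_a}+\beta\ket{x_a\oplus v}\bigr)_S ,
\]
where $a$ indexes the distinct restrictions of codewords of $C_1$ to $\bar S$. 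Designating one qubit of $S$ as control and the remaining $d-1$ qubits of $S$ as targets, a CNOT cascade followed by computational-basis measurement of the $d-1$ target qubits leaves the control qubit in $\alpha\ket{0}+\beta\ket{1}$, up to an $X$ or $I$ correction determined by the measurement outcomes. This mirrors exactly the CNOT--measure--correct procedure of the $(3,7)$ example.

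For the impossibility direction, the obstruction comes from the minimum distance of $C$ itself. The logical Pauli operator $\bar X$ of the CSS code is representable, modulo stabilisers in $C_1$, as a tensor product of $X$ operators supported on a representative $v' \in C \setminus C_1$; every such representative is a nonzero codeword of $C$ and hence has weight at least $d$. If $|S| < d$ then no representative of $\bar X$ is supported inside $S$, so $\ket{0}_L$ and $\ket{1}_L$ cannot be coherently interchanged by operations acting on $S$ alone, and in particular a general secret $\alpha\ket{0}_L+\beta\ket{1}_L$ cannot be reconstructed from the reduced state on $S$.

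The main difficulty I anticipate is ensuring that the chosen weight-$d$ codeword actually lies in $C \setminus C_1$; if every minimum-weight codeword of $C$ lay inside $C_1$, the achievability argument would fail and one would need the extra hypothesis $d' > d$ (as in the Steane case, where $d=3 < 4 = d'$). A secondary technical point is justifying the regrouping step above: the decomposition requires the restriction map $x \mapsto x|_{\bar S}$ to be injective on $C_1$, which one must verify using the distance-$d$ property of $C$ together with $C_1 \subset C$.
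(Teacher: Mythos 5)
Your proof follows essentially the same route as the paper's: take representatives of the two cosets of $C_1$ in $C$, note that their difference is a nonzero codeword of $C$ and hence has weight at least $d$, and run the CNOT--measure--correct procedure on that support. You are, however, more careful than the paper on two points. First, you make explicit the regrouping of the superposition over $C_1$ into blocks $\alpha\ket{x_a}+\beta\ket{x_a\oplus v}$ on $S=\mathrm{supp}(v)$; this is the step that actually justifies why measuring the $d-1$ target qubits disentangles the control qubit and leaves it in $\alpha\ket{0}+\beta\ket{1}$ up to a Pauli correction, whereas the paper only asserts that two codewords ``differ in $d$ bits'' and applies the gates. Second, you supply an impossibility direction (no representative of $\bar X$ fits inside a set of fewer than $d$ positions), which the paper does not argue at all --- it simply concludes that ``minimum $d$ parties have to collaborate.'' The difficulty you flag at the end is also genuine: the argument needs a weight-$d$ representative of the nontrivial coset $C\setminus C_1$, and if every minimum-weight codeword of $C$ lay in $C_1$, the reconstructing set would have to have size equal to the minimum coset weight, which can exceed $d$. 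This happens to hold in the $(3,7)$ instance (the weight-$3$ Hamming codewords have odd weight and so lie outside the even-weight subcode $C_1$), but it is not automatic for general $C_1\subset C$, so your version correctly isolates a hypothesis that the paper's proof silently assumes.
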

\begin{proof}
According to the definition of the CSS code, $\ket{0}_L = \frac{1}{\| {C_1}\|} \sum_{x \in {C_1}} {\ket{x+C_1}}$ and  $\ket{1}_L = \frac{1}{\| {C_1}\|} \sum_{x \notin {C_1}} {\ket{x+C_1}}$. Thus, the codewords which consist the codeword $\ket{0}_L$ belong to $C_1$. On the other hand, the codewords consisting the codeword $\ket{1}_L$ belong to $C/C_1$. So we always get two orthogonal codewords which come from two different cosets. One codeword is associated with $\alpha$ and other codeword is associated with $\beta$. Let they be $u$ and $v$. Since $dist(u-v) \geq {d}$, if we apply CNOT gate on these $d$ bits considering the first bit as control, we get $(d-1)$ target bits which are equal in both the codewords. 
So measuring those $(d-1)$ qubits in $\{ 0,1\} ^{(d-1)}$ basis we can reconstruct the secret depending on the measurement result. Thus, for secret reconstruction minimum $d$ parties have to collaborate. 
\end{proof}

Note that since $C_1 \subset C$, we must have $d' \geq d$. For $(t,n)$
QRSS, we set $t = d$.

\subsection{$t$-out-of-$n$ Quantum Rational Secret Sharing Protocol}
Let us now present our generalized scheme.\\

\noindent\textbf{\underline{[2. Protocol $\pi^{t,n}_{QRSS}$]}}\\
\textbf{2.1 The Dealer\rq s Protocol $\pi^{t,n}_{ShareGen}$:}\\
\textbf{Input.} The quantum secret to be shared using $(t,n)$ threshold secret sharing.\\
\textbf{\textit{2.1.1 Share Generation:}}
The dealer does the following:
\begin{itemize}
\item The dealer designates $t$ players among $n$, from the access structure.
\item Chooses the $r$ according to a geometric distribution $\mathcal{G}(\gamma)$ with parameter $\gamma$.  
\item Generates $t$ copies of each secret (fake as well as actual) for each round and encodes those in CSS code derived from $C$ and $C_1$ (see \autoref{vmezofelint}).
\item Prepares a list $list_i$ of shares for each party $P_i$ such that:
\begin{itemize}
\item Each player $P_i$ is given a qubit from a valid set of $t$ qubits from the access structure like $(3,7)$ QRSS.
\item Each list contains $t(r+w)$ shares, where $w$ is also chosen according to $\mathcal{G}(\gamma)$. 
\end{itemize}
\end{itemize}
\textbf{Output.} The dealer distributes $list_i$ to party $P_i$.\\ \\
\textbf{\textit{2.1.2 Unmasking of Revelation Round $\pi^{t,n}_{Unmask}$:}}\\
\textbf{Input.} Signal $sig_i$ from each player $P_i$, $(i\in{1,\ldots,t})$.\\
\textbf{Computation and Communication.} The dealer does the following:
\begin{itemize}
\item If $sig_i=1$ for all $i \in {1,\ldots,t}$, announces the value of $r$ to each $P_i$.
\item If for at least one value of $i$, $sig_i\neq1$, aborts after announcing \textit{abort} to each party $P_i$.
\end{itemize}
\textbf{Output.} The dealer outputs either $r$ or \textit{abort} depending on the values of $sig_i$.\\ \\
\textbf{2.2 The Player\rq s Protocol}\\
\textbf{Input.} The list of shares $list_i$.\\
\textit{\textbf{2.2.1 Secret Reconstruction $\pi^{t,n}_{Recon}$:}}\\
Each player $P_i$ does the following:
\begin{itemize}
\item In the $i$th sub-round of a round $j$, $P_i$ is given the current shares (qubits) by other $(t-1)$ players. 
\item Checks to see if the number of shares received is less than $(t-1)$. If yes, aborts and sends $sig_i=0$ to dealer. Else, continues.
\item At the end of round $j$, does the following:
\begin{itemize}
\item Applies CNOT gate considering his current share (qubit) as control. 
\item Measures target bits in $\{0,1\}^{(t-1)}$ basis. 
\item Depending on the measured value operates either $X$ gate or $I$ gate. 
\end{itemize}
\item Stores the secret $s_j$ obtained in the $j$th round. 
\item If $j={\frac{1}{t}|list_i|}$, sends $sig_i=1$ to the dealer. If the dealer sends \textit{abort}, then aborts; else if the dealer sends the value of $r$ stores only $s_r$ and quits. If $j<{\frac{1}{t}|list_i|}$, continues.
\end{itemize}
\textbf{Output.} The quantum secret $s_r$.

In the next result, we show that the fairness condition for classical domain
remains valid in the quantum domain as well.
\begin{theorem}
\label{fair}
If $\gamma>0$ and $U_i^{TT}> \gamma U_i^{TN} + (1-\gamma) U_i^{NN}$, 
the protocol $\pi^{t,n}_{QRSS}$ achieves fairness.
\end{theorem}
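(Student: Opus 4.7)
The plan is to show that under the hypothesis, no fail-stop deviation $\sigma'_i$ of $P_i$ is strictly profitable; $P_i$ therefore follows the suggested strategy, the protocol completes, every authorized party obtains the same $s_r$, and the asymmetry required by the fairness definition is preserved.

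First I would classify the deviations. By the fail-stop assumption, $\sigma'_i$ reduces to a stopping rule of the form ``follow $\sigma_i$ for rounds $1,\ldots,j^*-1$, then remain silent for every sub-round of round $j^*$''. Every round before $j^*$ is executed honestly, so the analysis can focus on what happens at round $j^*$ and afterwards. As soon as $P_i$ withholds his share, each player $P_k$ ($k\ne i$) scheduled to receive sees fewer than $t-1$ incoming shares, aborts and sends $sig_k=0$; the dealer's unmasking step $\pi^{t,n}_{Unmask}$ then replies with \emph{abort}, and the index $r$ is never disclosed.

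Next, the probabilistic payoff bound. By the memoryless property of the geometric distribution $\mathcal{G}(\gamma)$, conditional on the game having reached round $j^*$ without $r$ being revealed we have $Pr[r=j^*]=\gamma$ and $Pr[r\ne j^*]=1-\gamma$. In the favourable case ($r=j^*$) $P_i$ can, at best, isolate $s_{j^*}=s_r$ in his own sub-round while the honest players fail to complete the round, giving outcome $TN$. In every other case the revelation round is either still ahead or already past but nobody can single out $s_r$ after the abort, so neither $P_i$ nor the others end with the genuine secret, giving outcome $NN$. Summing the two contributions,
$$\mathbb{E}\bigl[u_i(\sigma'_i,\overrightarrow{\sigma}_{-i})\bigr] \leq \gamma U_i^{TN} + (1-\gamma) U_i^{NN},$$
which by the hypothesis is strictly less than $U_i^{TT}$, the payoff of honest play. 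Hence $P_i$ has no incentive to deviate; the protocol runs to completion, $r$ is announced, and every authorized party outputs the same $s_r$. Consequently the deviator's chance of obtaining $s$ alone is bounded by $\gamma$ while the honest parties' joint probability of obtaining $s$ is strictly larger, yielding the fairness inequality $Pr[o_i=s]<Pr[o_{-i}=s]$.

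The main obstacle is in the bound above: justifying $Pr[r=j^*]=\gamma$ even though the list length $t(r+w)$ is observable and in principle leaks information about $r$. I would need to argue that the additional geometrically-distributed masking parameter $w$ prevents $P_i$ from beating the geometric mode $\gamma$ by a cleverer choice of $j^*$, so that the bound is tight and the conclusion holds uniformly over every stopping rule.
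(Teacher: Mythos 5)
Your proposal is correct and follows essentially the same route as the paper's proof: the deviator's expected utility is $\gamma U_i^{TN}+(1-\gamma)U_i^{NN}$, which the hypothesis makes strictly smaller than the honest payoff $U_i^{TT}$, so no fail-stop deviation is profitable and fairness follows. The ``obstacle'' you flag at the end---justifying $\Pr[r=j^*]=\gamma$ given that the list length $t(r+w)$ is observable---is a legitimate subtlety, but the paper does not address it either; it simply asserts that a correct guess occurs with probability $\gamma$, relying implicitly on the second geometric variable $w$ to mask $r$, so your argument is at least as complete as the published one.
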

\begin{proof}
A player who wants to obtain the secret alone must be able to correctly guess which round is the revelation round. Suppose the $i$th player guesses that the $j$th round is the revelation round and quits in the $i$th sub-round of the $j$th round. On other words, the player remains silent from the $(i+1)$th sub-round of $j$th round. Our protocol is designed in such a way that if any player quits in any intermediate round, then it is reported to the dealer by a signal bit ($sig$). If for at least one value of $i$, $sig_i\neq1$, the dealer aborts after announcing \textit{abort} to each party $P_i$.Thus, if the guess of the $i$th player is correct i.e $j = r$,the probability of which is $\gamma$, his utility is $U_i^{TN}$, else his utility is $U_i^{NN}$. So the expected utility of the player who decides to deviate based on his guess is given by $\gamma U_i^{TN}+(1-\gamma)U_i^{NN}$. On the other hand, if he simply followed the protocol, his utility would have been $U_i^{TT}$. However, the dealer chooses the value  $\gamma$ in such a way so that
$$\gamma U_i^{TN}+(1-\gamma)U_i^{NN}<U_i^{TT}.$$ Thus the player should have no incentive to deviate. He always gives his share and hence the protocol achieves fairness.
\end{proof}

The next result establishes the correctness of our general scheme.
\begin{theorem}
Even if some players may have $U_i^{NF} \geq U_i^{TT}$,
the protocol $\pi^{t,n}_{QRSS}$ achieves correctness.
\end{theorem}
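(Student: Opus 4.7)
The plan is to leverage the honesty of the semi-offline dealer together with the fail-stop assumption to confine every honest party's output to the set $\{s,\perp\}$, so that a fake secret can never surface regardless of the utility profile (in particular, the additional preference $U_i^{NF} \geq U_i^{TT}$ will be irrelevant because the deviation it encourages is simply not realizable).

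First I would observe that, by the fail-stop assumption, any alternative strategy $\sigma'_i$ of $P_i$ that is not equivalent to $\sigma_i$ must have $P_i$ honestly executing $\pi^{t,n}_{Recon}$ up to some sub-round of some round $j^*$ and then remaining silent thereafter; no transmitted share is ever fabricated. Meaningful deviations are thus parametrised solely by the sub-round at which $P_i$ quits.

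Next I would show, by a short case analysis on when the quit occurs, that any early termination by $P_i$ causes at least one honest $P_k$ ($k \neq i$) to receive strictly fewer than $t-1$ shares in some sub-round following the quit: either later in round $j^*$ (in the first sub-round $k > k^*$ with $k \neq i$) or in the first sub-round of round $j^*+1$. By the check ``number of shares received is less than $t-1$'' in $\pi^{t,n}_{Recon}$, that $P_k$ sends $sig_k = 0$ to the dealer, and $\pi^{t,n}_{Unmask}$ then broadcasts \emph{abort}. Upon receiving \emph{abort}, each honest party follows the last bullet of $\pi^{t,n}_{Recon}$ and outputs $\perp$. Conversely, if $P_i$ completes every round and sends $sig_i = 1$, the transcript seen by any other party is indistinguishable from an honest execution, the dealer announces the true $r$, and each honest $P_k$ outputs $s_r = s$, which is the genuine secret by Theorem~\ref{vmezofelint} and the CSS decoding steps of $\pi^{t,n}_{Recon}$.

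The structural fact that drives the argument is that within any round $j$ the parties merely compute and store $s_j$; they never commit to $s_j$ as the reconstructed secret until the honest dealer announces the index $r$. A deviating $P_i$ therefore has no channel by which to trick any other party into outputting some $s_j$ with $j \neq r$: the only source of $r$ is the dealer, and she releases it only after verifying that every $sig_\ell = 1$. The step I anticipate as the most delicate is the boundary case in which $P_i$ occupies the last sub-round of a round (e.g.\ $i = t$) and stays silent only from the start of the next round --- here one must verify that the subsequent sub-round still forces a $sig_k = 0$, rather than allowing all honest parties to reconstruct and store $s_{j^*}$ while the dealer is somehow misled. Once this corner is checked, $\Pr[o_{-i} \notin \{s,\perp\}] = 0$ follows immediately and correctness is established.
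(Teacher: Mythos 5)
Your proposal is correct and rests on the same core mechanism as the paper's proof --- the semi-offline dealer is the sole source of the index $r$, and he releases it only after every $sig_\ell=1$, so no party ever infers the revelation round from another party's behaviour --- but you argue it along a genuinely stronger line. The paper's proof is incentive-based: it asserts that the protocol is $U_i^{NF}$-independent and that ``no player has any incentive to quit in an intermediate round'' in order to plant a fake secret. Strictly speaking, the definition of correctness demands $\Pr[o_{-i}\notin\{s,\perp\}]=0$ for \emph{every} alternative strategy $\sigma'_i$, not merely for strategies a rational player would choose, so an incentive argument alone does not quite discharge the quantifier. Your impossibility argument does: under the fail-stop assumption every deviation is an early quit, every early quit deprives some honest party of a share, triggers $sig_k=0$, and forces the dealer to broadcast \textit{abort}, whence all honest outputs are $\perp$; absent any quit, the dealer announces the true $r$ and all honest outputs are $s$. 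This confines honest outputs to $\{s,\perp\}$ unconditionally, which is exactly what the definition requires, and it renders the hypothesis $U_i^{NF}\geq U_i^{TT}$ vacuous rather than merely non-binding. The boundary case you flag (the last party in a round quitting only at the start of the next round) is handled by the protocol's structure, since the first sub-round of round $j^*+1$ still requires $P_i$ to transmit and his silence is detected there; the parties only \emph{store} the candidate $s_{j^*}$ and never commit to it without the dealer's announcement, so no fake secret is output even in that corner. In short, your route proves a stronger statement than the paper argues for, at the cost of a longer case analysis.
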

\begin{proof}
In our protocol, the dealer is semi-offline. The revelation round is unmasked by him after all the players report that the reconstruction game is over. Therefore, players do not depend on the action of the last player to know which round is the revelation round (see \autoref{correctness}). Thus, the protocol is $U_i^{NF}$ independent. No  player has any incentive to quit in an intermediate round in the purpose to make the others believe in a fake secret as actual secret. 
Hence the protocol is correct.
\end{proof}
Note that in~\cite{AL}, it is mentioned that for non-simultaneous channel model,
$U_i^{NF}$ independence is impossible. But there the underlying assumption is
that the dealer is offline. Since in our QRSS, the dealer is semi-offline, we 
can easily get $U_i^{NF}$ independence even in non-simultaneous channel.

Now, we can state the following result on equilibrium.
\begin{theorem}
\label{nash1}
If $\gamma>0$ and $U_i^{TT}> \gamma U_i^{TN} + (1-\gamma) U_i^{NN}$, then protocol $\pi^{t,n}_{QRSS}$ achieves strict Nash equilibrium.
\end{theorem}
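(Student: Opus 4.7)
The plan is to show that for any rational player $P_i$, the utility of following the suggested strategy $\overrightarrow{\sigma}$ strictly dominates the utility of every alternative strategy $\sigma'_i$. Since compliant execution of $\pi^{t,n}_{QRSS}$ by all parties yields $u_i(\overrightarrow{\sigma})=U_i^{TT}$, because every party obtains the reconstructed secret $s_r$ once the dealer unmasks the revelation round, it suffices to establish $u_i(\sigma'_i,\overrightarrow{\sigma}_{-i})<U_i^{TT}$ for every alternative $\sigma'_i$.

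First I would argue that the only meaningful deviation is to remain silent starting from some sub-round of some guessed round $j$. The player is fail-stop, so injecting false shares is excluded; premature measurement of a qubit destroys the entanglement without producing any information about the secret, as noted earlier in the discussion of security issues; and any misreport during the signalling phase either still triggers a dealer-initiated \textit{abort}, in which case no useful information about $r$ is leaked, or is strategically equivalent to an outright quit. In particular, withholding a share in round $j$ forces some honest recipient to obtain fewer than $t-1$ shares and thus to send $sig=0$, so the deviator never learns the true index $r$ from the dealer.

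Next I would tally the branch-by-branch utilities exactly as in the proof of \autoref{fair}. A deviator who stops participating in sub-round $i$ of round $j$ has, by that point, already received all $t-1$ shares needed to reconstruct $s_j$ locally. Because $r\sim\mathcal{G}(\gamma)$ is drawn by the dealer before any share is distributed and the deviator's entire view is information-theoretically independent of $r$, the event $j=r$ occurs with probability $\gamma$; in that event the deviator alone holds the true secret, giving payoff $U_i^{TN}$, while in the complementary event he stores only a fake $s_j$ and nobody recovers the true secret, giving payoff $U_i^{NN}$. Hence the expected payoff of the best single-round quit is $\gamma U_i^{TN}+(1-\gamma)U_i^{NN}$, which by hypothesis is strictly less than $U_i^{TT}$.

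The main technical hurdle will be to rule out more elaborate deviations, such as mixed strategies, adaptively timed quits, or asymmetric signalling after partial participation. The key observation is that the dealer's abort rule closes every channel through which $r$ could leak: once any $sig_i\neq 1$ reaches the dealer, $r$ is never announced, and because $r$ was drawn independently of the entire transcript, Bayesian updating on whatever the deviator has observed leaves the posterior on $r$ equal to $\mathcal{G}(\gamma)$. Consequently, $\sup_{\sigma'_i} u_i(\sigma'_i,\overrightarrow{\sigma}_{-i})=\gamma U_i^{TN}+(1-\gamma)U_i^{NN}$, and the strict inequality in the hypothesis immediately yields strict Nash equilibrium in the sense of the earlier definition.
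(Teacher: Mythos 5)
Your proposal follows essentially the same route as the paper: reduce every deviation to a quit at some guessed round $j$, compute the expected payoff of that quit via the geometric distribution, and invoke the hypothesis $\gamma U_i^{TN}+(1-\gamma)U_i^{NN}<U_i^{TT}$. The substantive content matches, and your extra discussion ruling out measurement and false signalling is consistent with what the paper argues elsewhere in its security discussion.

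One bookkeeping slip is worth flagging. You assert that the event $j=r$ occurs with probability $\gamma$ and conclude $\sup_{\sigma'_i}u_i(\sigma'_i,\overrightarrow{\sigma}_{-i})=\gamma U_i^{TN}+(1-\gamma)U_i^{NN}$. For $r\sim\mathcal{G}(\gamma)$ the unconditional probability is $\Pr[r=j]=\gamma(1-\gamma)^{j-1}$; the value $\gamma$ is the conditional probability $\Pr[r=j\mid r\geq j]$. Consequently you have silently conditioned on $j\leq r$ and dropped the branch $j>r$ (quitting after the revelation round), which the paper handles explicitly because in $\pi^{t,n}_{QRSS}$ the game runs until the lists are exhausted and a late quit is possible. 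The paper's expression is
$\bigl(\gamma U_i^{TN}+(1-\gamma)U_i^{NN}-U_i^{TT}\bigr)\Pr[j\leq r]+U_i^{TT}$,
a convex combination that can strictly exceed your claimed supremum whenever $\Pr[j>r]>0$, so your supremum identity is false as stated. The final conclusion survives, since the convex combination is still strictly below $U_i^{TT}$ under the hypothesis, but the correct argument needs the three-way decomposition over $\{j<r,\,j=r,\,j>r\}$ rather than the two-way one you give.
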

\begin{proof}
Let us assume that a party $P_i$ follows the deviating strategy $\sigma'_i$, when all other parties follow the protocol. Suppose $P_i$ aborts at round $j$ and the revelation round is $r$. Then either $j$ is itself the revelation round, i.e., $r=j$ or it is a round before the revelation round, i.e.  $r>j$. In our case, the secret reconstruction game is played until each party exhausts his list of shares. After that the dealer points out the revelation round. Hence, there is a possibility that $P_i$ deviates after the revelation round, i.e., $r<j$ although such deviation is not helpful in any way to the deviating party. 
We assume that the correct secret can be obtained by a player only when he quits in the revelation round. 
From the property of geometric distribution, we have

\qquad\qquad $\gamma = \Pr[j=r|j\leq r]=\frac{\Pr[j=r]}{\Pr[j \leq r]}$, and 

\qquad\qquad $1-\gamma = \Pr[j<r|j\leq r]=\frac{\Pr[j<r]}{\Pr[j \leq r]}.$

Then, $u_i(\sigma'_i,\overrightarrow{\sigma}_{-i})$ is given by 
\begin{eqnarray*}
&& U_i^{TN}\Pr[j=r] + U_i^{NN}\Pr[j<r] + U_i^{TT}\Pr[j>r]\\
& = & \gamma U_i^{TN}\Pr[j\leq r] + (1-\gamma)U_i^{NN}\Pr[j\leq r]\\
&& + U_i^{TT}(1-\Pr[j\leq r]) \\
& = & (\gamma U_i^{TN} + (1-\gamma)U_i^{NN}- U_i^{TT})\Pr[j\leq r] + U_i^{TT}\\ 
& < & U_i^{TT}.
\end{eqnarray*}
The last inequality follows from our assumption that 
$\gamma U_i^{TN} + (1-\gamma) U_i^{NN} < U_i^{TT}$, which makes the term added to $U_i^{TT}$ negative.
In each sub-round, a player can send only a unique share (namely, the correct
share) as we have $u_i(\sigma'_i,\overrightarrow{\sigma}_{-i})<u_i(\overrightarrow{\sigma})$. So, our protocol follows strict Nash equilibrium.
\end{proof}

In the classical rational secret sharing domain, Halpern and Teague~\cite{HT}
claimed that (2,2) rational secret sharing scheme cannot be constructed.
Later, Gordon and Katz~\cite{GK} showed a general $(t,n)$ rational 
secret sharing that works for $n = t = 2$, thus refuting the claim of~\cite{HT}.
Our ($t,n$) quantum rational secret sharing scheme also
works for $t = 2$. In that case, we require [$n,k,2$] linear code to 
construct the CSS code. In principle, ($2,n$) QSS is the quantum analogue of
($2,2$) classical secret sharing, since in the quantum domain the dealer
designates, from the access structures, a specific subset of 2 players 
(out of $n$) except whom no one else can obtain the secret.

\section{Offline Dealer for QRSS}
\label{offline}
In this section, we propose a $(t,n)$ quantum rational secret sharing scheme with the dealer offline, i.e., after distributing the shares, the dealer does
not come into the picture. 

Considering the dealer as semi-offline, we have shown that 
our protocol $\pi^{t,n}_{QRSS}$ becomes $U^{NF}$ independent and hence correct. 
In \cite{AL}, it is shown that in case of offline dealer and non-simultaneous channel model, the protocol becomes $U^{NF}$ dependent. Hence, 
achieving correctness is not guaranteed, when the preferences of the players 
are defined by $\mathcal{R}_1$. Thus,  in case of offline dealer, 
to achieve correctness (which we show later), we suitably redefine 
the preferences of the players as follows:
$$\mathcal{R}_2 : U_i^{TN}> U_i^{TT}>U_i^{NN}>U_i^{NT},$$ 
and  $$U_i^{NF} < U_i^{TT}$$
for all players $i$.
Note that we have to restrict the utilities so that no player can have $U_i^{NF} \geq U_i^{TT}$.

In our protocol $\pi^{t,n}_{QRSSDO}$ below, 
we use a Boolean indicator variable $b$ associated with each round.
The secret bit $b$ is distributed among the designated $t$ parties through a 
$(t,t)$ Shamir secret sharing to denote whether the previous round was 
revelation round ($b=1$) or not ($b=0)$. Later, we discuss how to move from
classical indicator to quantum indicator.
\\ \\
\textbf{\underline{[3. Protocol $\pi^{t,n}_{QRSSDO}$]}}\\
\textbf{3.1 The Dealer\rq s Protocol $\pi^{t,n}_{ShareGen}$:}\\
\textbf{Input.} The quantum secret to be shared using $(t,n)$ threshold secret sharing.\\
\textbf{\textit{3.1.1 Share Generation:}}
The dealer does the following:
\begin{itemize}
\item Sets $t=d$ and designates $t$ players among $n$.
\item Chooses the $r$ according to a geometric distribution $\mathcal{G}(\gamma)$ with parameter $\gamma$.  
\item Generates $t$ copies of each secret (fake as well as actual) for each round and encodes those in CSS code derived from $C$ and $C_1$ (see \autoref{vmezofelint}).
\item Prepares a list $list_i$ of shares for each party $P_i$ such that:
\begin{itemize}
\item Each element $e_{k,i}$ in the list $list_i$ consists of two parts: a qubit from a valid set of $t$ qubits from the access structure as in $(3,7)$ QRSS and a $(t,t)$ Shamir share of a Boolean value indicating whether the previous round was the revelation round.
\item Each list contains $k=t(r+w)$ shares, where $w$ is also chosen according to $\mathcal{G}(\gamma)$. 
\end{itemize}
\end{itemize}
\textbf{3.2 The Player\rq s Protocol}\\
\textbf{Input.} The list of shares $list_i$.\\
\textit{\textbf{3.2.1 Secret Reconstruction $\pi^{t,n}_{Recon}$:}}\\
Each player $P_i$ does the following:
\begin{itemize}
\item In the $i$th sub-round of a round $j$, $P_i$ is given the current element (one qubit and one classical bit) in the lists by other $(t-1)$ players. 
\item If the number of elements received is less than $(t-1)$ or if a partial element has been received then aborts. Else, continues.
\item At the sub-round $i$ of round $j$, does the following:
\begin{itemize}
\item Stores the qubits obtained from the $(t-1)$ parties.
\item Reconstruct the Boolean value $b$ associated with that round. 
\item If $b = 1$, 
\begin{itemize}
\item Set $r = (j-1)$.
\item Applies CNOT gate considering his $(j-1)$th share (qubit) as control. 
\item Measures target bits in $\{0,1\}^{(t-1)}$ basis. 
\item Depending on the measured value operates either $X$ gate or $I$ gate. 
\item Stores the quantum secret $s_{r}$ obtained in the $(j-1)$th round. 
\end{itemize}
Else, continues.
\end{itemize}
\end{itemize}
\textbf{Output.} The quantum secret $s_r$.
\\
\\
We can make the above protocol fully quantum by replacing the indicator bits
0 and 1 by qubits $\ket{0}$ and $\ket{1}$ respectively. However, instead of 
Shamir's ($t,t$) share, $\ket{0}$ and $\ket{1}$ are encoded by CSS code just like the secret state $s$. The dealer distributes the list ($list_i$) to each player $P_i$, containing two qubits, first one for the secret and the second one for the round. In each sub-round $i$ of a round $j$, player $P_i$ reconstructs the qubit associated with that round. If it is $\ket{1}$, the player comes to know that the $(j-1)$th round was the revelation round. He reconstructs the secret for the $(j-1)$th round and discards other qubits obtained in the previous rounds.  

Note that unlike the protocol for semi-offline dealer, the game will be over when the first player gets $1$ (in classical) or $\ket{1}$ (in quantum). He has no incentive to send his shares in subsequent sub-rounds. He then just quits the game. The rest of the players then conclude that the revelation round has been occurred just before that round. Thus they also get the secret by operating CNOT gate and measuring the target bits for the last complete round.

Unlike semi-offline dealer, in this case the players need not to reconstruct the secret qubits for each round. Instead, they reconstruct the secret only for the revelation round. Moreover, the players are not forced to exhaust their lists of shares. The revelation round is the last round in this protocol as after the revelation round the players have no incentive to continue the game. The protocol is fair, correct and achieves strict Nash equilibrium.  

\begin{theorem}
If $\gamma>0$ and $U_i^{TT}> \gamma U_i^{TN} + (1-\gamma) U_i^{NN}$, 
the protocol $\pi^{t,n}_{QRSSDO}$ achieves fairness.
\end{theorem}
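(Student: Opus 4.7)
The plan is to parallel the proof of Theorem~\ref{fair}, adapting it to the offline-dealer setting where the revelation round is signalled by the indicator qubit $\ket{1}$ rather than by a dealer announcement. First, I would fix an arbitrary fail-stop deviation $\sigma'_i$ and note that, since players can only either send their share or remain silent, the deviation reduces to $P_i$ choosing some round $j$ at which to quit before having observed the indicator $\ket{1}$ for the current round. By the protocol description, if $P_i$ quits in round $j$, the other $(t-1)$ parties abort the reconstruction of $b_j$ and infer that round $(j-1)$ was the revelation round, reconstructing their stored secret $s_{j-1}$. The deviation is therefore essentially a guess that $j-1 = r$.

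Next, I would compute the expected utility of this deviation. Because the indicator $b_k$ for any round $k$ is distributed via $(t,t)$ Shamir secret sharing (or its quantum CSS-encoded analogue) and is information-theoretically hidden from any $t-1$ shares, $P_i$'s posterior on $r$ conditional on his view up to the quit-point is still the prior geometric distribution $\mathcal{G}(\gamma)$. Hence the probability that the guess $j-1 = r$ succeeds is exactly $\gamma$. If the guess succeeds, $P_i$ can reconstruct $s$ from the shares of round $(j-1)$ already in his possession, conceivably leaving others with $s$ too but at best giving him utility $U_i^{TN}$; if the guess fails, the remaining players reconstruct a fake secret they mistake for the real one and $P_i$ ends up with nothing better than a fake, for utility at most $U_i^{NN}$. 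Combining these cases, the expected utility of the deviation is bounded above by
$$
\gamma\, U_i^{TN} + (1-\gamma)\, U_i^{NN},
$$
which by the hypothesis of the theorem is strictly less than $U_i^{TT}$, the utility of honestly following the protocol.

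Finally, I would convert this utility comparison into the probabilistic statement demanded by the fairness definition. Since the best achievable probability with which a deviator can obtain $s$ is $\gamma$ (he only wins when he happens to quit in round $r+1$), while under honest play every non-deviating player obtains $s$ with probability $1$ once round $r+1$ is reached, we conclude $\Pr[o_i = s] \le \gamma < \Pr[o_{-i} = s]$, establishing fairness. The step I expect to be the main obstacle is the rigorous justification that $P_i$ cannot exploit the partial (quantum CSS and Shamir) shares gathered in earlier sub-rounds to sharpen his guess about $r$; this reduces to the information-theoretic hiding of Shamir's $(t,t)$ scheme on each indicator $b_k$ together with the hiding property of the CSS encoding against coalitions of size $<t$, both of which are inherited from the preceding sections.
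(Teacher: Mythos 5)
Your proposal takes essentially the same route as the paper: the paper's entire proof of this theorem is the single line that it is the same as Theorem~\ref{fair}, and your argument is exactly that guess-the-revelation-round computation ($\gamma U_i^{TN}+(1-\gamma)U_i^{NN}<U_i^{TT}$, hence no incentive to deviate) transplanted to the offline-dealer setting. You in fact supply more detail than the paper does --- the reduction of a fail-stop deviation to a guess that $j-1=r$, the information-theoretic hiding of the indicator, and the link back to the probabilistic fairness definition --- but the core argument is identical.
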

The proof is the same as \autoref{fair}.

\begin{theorem}
Provided $U_i^{TT}>  U_i^{NF}$, 
the protocol $\pi^{t,n}_{QRSSDO}$ achieves correctness.
\end{theorem}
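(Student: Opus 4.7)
The plan is to enumerate the deviating strategies $\sigma'_i$ available to a single player $P_i$, isolate those that could cause the remaining $(t-1)$ parties to output a value outside $\{s,\perp\}$, and show that each such deviation is strictly dominated in utility by honest play. Correctness then follows from the definition, because no rational $P_i$ will mount an attack that is strictly worse than simply running the protocol, so the probability that the honest parties accept a value in neither $\{s,\perp\}$ is zero.

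First I would recall the signalling mechanism of $\pi^{t,n}_{QRSSDO}$: the indicator $b$ equals $0$ in every round prior to the true revelation round $r$ and equals $1$ in round $r+1$. Because the dealer is offline, the non-deviating parties learn which round was revelation only via the behaviour of the first player who decodes $b=1$, namely by his quitting. Consequently, from the viewpoint of the honest parties any premature silence of $P_i$ is interpreted as ``$b=1$ in the just-completed round.'' This isolates the only dangerous class of deviations: $P_i$ quits at some round $j<r$, causing the other parties to reconstruct and trust the fake secret that lives in round $j-1$. Any deviation at round $j\geq r+1$ cannot mislead anybody, since by then $b=1$ has already been decoded honestly and the revelation round $r$ has been correctly identified.

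Second, I would compute $P_i$'s own utility under the premature-quit deviation. Since $P_i$ has not observed $b=1$ at the time of quitting, he himself treats no round as the revelation round and therefore outputs $\perp$, while the other $(t-1)$ players output a fake. This is exactly the outcome with utility $U_i^{NF}$. If instead $P_i$ had followed the protocol, his utility would have been $U_i^{TT}$. The hypothesis $U_i^{TT}>U_i^{NF}$ then makes the premature quit strictly worse, so a rational $P_i$ never performs it. Hence $\Pr[o_{-i}(\Gamma,(\sigma'_i,\overrightarrow{\sigma}_{-i}))\notin\{s,\perp\}]=0$ for every useful $\sigma'_i$, which is the stated correctness condition.

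The main obstacle I expect is the mid-round sub-round bookkeeping: $P_i$ can stop partway through the $t$ sub-rounds of round $j$, so some honest players may have already received $P_i$'s share for round $j$ while others have not. I would argue that this asymmetry does not affect the conclusion, because every honest player has already completed round $j-1$ in full and can therefore reconstruct its fake secret on its own; each of them, upon observing the missing share in round $j$, converges on the same (wrong) inference that $j-1$ was revelation. Thus the resulting outcome is uniformly of the $U_i^{NF}$ type across honest players, and the strict-inequality argument above applies uniformly, completing the reduction of correctness to the single comparison $U_i^{TT}>U_i^{NF}$.
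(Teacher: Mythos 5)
Your proof is correct and follows essentially the same route as the paper's (far terser) argument: the only deviation that can push the honest parties' output outside $\{s,\perp\}$ is a premature quit that falsely signals the revelation round, and since this leaves the deviator with utility $U_i^{NF}<U_i^{TT}$, a rational party never plays it. Your extra bookkeeping about sub-round asymmetry and the deviator's own output merely makes explicit what the paper leaves implicit.
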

\begin{proof}
As $U_i^{TT}>  U_i^{NF}$, no player has any incentive to mislead others to believe in a wrong secret as an actual secret when he himself does not get the real secret. Thus the protocol is correct.
\end{proof}

We show strict Nash equilibrium in the next result. 
\begin{theorem}
\label{nash2}
If $\gamma>0$ and $U_i^{TT}> \gamma U_i^{TN} + (1-\gamma) U_i^{NN}$, then protocol $\pi^{t,n}_{QRSSDO}$ achieves strict Nash equilibrium.
\end{theorem}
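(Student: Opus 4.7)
The plan is to mimic the proof of \autoref{nash1} nearly verbatim, exploiting the same memoryless property of the geometric distribution $\mathcal{G}(\gamma)$, and to justify only the few places where the offline-dealer mechanism departs from the semi-offline one. Since players are fail-stop, an arbitrary deviation $\sigma'_i$ of $P_i$ reduces to choosing a round $j$ at which he stops sending his shares; the novel step not present in \autoref{nash1} is to verify that the three cases $j>r$, $j=r$, $j<r$ still give utilities bounded above by $U_i^{TT}$, $U_i^{TN}$ and $U_i^{NN}$ respectively, now in the absence of a dealer-triggered abort.

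For $j>r$ the indicator qubit of round $r+1$ has already signalled the natural end of the game, so every player (including $P_i$) has decoded $s_r$ and the utility is $U_i^{TT}$. For $j=r$, $P_i$ can receive the shares of his own sub-round and privately reconstruct $s_r$, while the remaining sub-rounds of round $r$ never complete; this yields utility at most $U_i^{TN}$. For $j<r$, the other parties follow the protocol convention and infer that round $j-1$ was the revelation round, so they reconstruct a fake secret $s_{j-1}$; here the restriction $U_i^{NF}<U_i^{TT}$ imposed by the stronger preference $\mathcal{R}_2$ is precisely what keeps the payoff in this sub-case below $U_i^{TT}$.

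Using $\gamma=\Pr[j=r\mid j\leq r]$ and $1-\gamma=\Pr[j<r\mid j\leq r]$, the computation proceeds exactly as in \autoref{nash1}:
\begin{align*}
u_i(\sigma'_i,\vec\sigma_{-i})
& \leq \bigl(\gamma U_i^{TN}+(1-\gamma)U_i^{NN}-U_i^{TT}\bigr)\Pr[j\leq r] \\
& \quad + U_i^{TT} \\
& < U_i^{TT},
\end{align*}
because the hypothesis $\gamma U_i^{TN}+(1-\gamma)U_i^{NN}<U_i^{TT}$ makes the bracketed term strictly negative and $\Pr[j\leq r]>0$. Since at each sub-round a fail-stop $P_i$ can only transmit the unique correct share or remain silent, every distinct deviation is strictly dominated, yielding the strict inequality required by the definition of strict Nash equilibrium.

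The main obstacle I anticipate is handling the $j<r$ case: unlike \autoref{nash1}, where the dealer announces \textit{abort} as soon as any party quits early, the offline protocol here interprets a premature quit as a signal that round $j-1$ was revelation, so the other parties actually \emph{do} output something, namely the fake secret $s_{j-1}$. Bounding $P_i$'s payoff in this branch by $U_i^{TT}$ is the very reason the stronger preference $\mathcal{R}_2$ (with $U_i^{NF}<U_i^{TT}$ for every $i$) was introduced; once this restriction is in force, the remainder of the proof transcribes from \autoref{nash1} without further modification.
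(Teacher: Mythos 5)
Your argument follows the paper's proof in its essentials---the same reduction of a fail-stop deviation to a choice of quitting round $j$, the same use of the memorylessness of $\mathcal{G}(\gamma)$ via $\gamma=\Pr[j=r\mid j\le r]$, and the same final inequality---but you have missed the one point the paper explicitly singles out as what changes between Theorem~\ref{nash1} and this theorem. In $\pi^{t,n}_{QRSSDO}$ the revelation round is the \emph{last} round: the game stops once the indicator reads $1$, so the case $j>r$ simply does not occur. The paper's expected utility therefore has only two terms, $U_i^{TN}\Pr[j=r]+U_i^{NN}\Pr[j<r]=(\gamma U_i^{TN}+(1-\gamma)U_i^{NN})\Pr[j\le r]<U_i^{TT}$, with no $U_i^{TT}\Pr[j>r]$ contribution. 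You instead import that third term verbatim from Theorem~\ref{nash1} and assign it utility $U_i^{TT}$; since $\Pr[j>r]=0$ your bound is not wrong, but you have reproduced exactly the step the paper says must be dropped, rather than adapting it.

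The second issue is an internal inconsistency in your handling of the branch $j<r$, which you yourself identify as the main obstacle. Your prose says that when $P_i$ quits early the other parties output the fake secret $s_{j-1}$, so by the utility table $P_i$ earns $U_i^{NF}$, and you invoke $U_i^{NF}<U_i^{TT}$ to control this; but your displayed computation then charges this branch $U_i^{NN}$, and the theorem's hypothesis bounds only the combination $\gamma U_i^{TN}+(1-\gamma)U_i^{NN}$. The two facts $U_i^{NF}<U_i^{TT}$ and $\gamma U_i^{TN}+(1-\gamma)U_i^{NN}<U_i^{TT}$ do \emph{not} together imply $\gamma U_i^{TN}+(1-\gamma)U_i^{NF}<U_i^{TT}$ (take $U_i^{NF}$ just below $U_i^{TT}$ and $U_i^{NN}$ much smaller), so if the $j<r$ branch really earns $U_i^{NF}$ your chain of inequalities does not close under the stated hypotheses. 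The paper sidesteps this by assigning $U_i^{NN}$ to the early-quit branch---treating it as yielding $\perp$ for everyone---and deferring the fake-secret outcome entirely to the separate correctness theorem; to make your more careful accounting rigorous you would need an additional assumption such as $U_i^{NF}\le U_i^{NN}$, or a correspondingly strengthened hypothesis in the theorem statement.
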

\begin{proof}
Let us assume that a party $P_i$ follows the deviating strategy $\sigma'_i$, when all other parties follow the protocol. Suppose $P_i$ aborts at round $j$ and the revelation round is $r$. Then either $j$ is itself the revelation round, i.e., $r=j$ or it is a round before the revelation round, i.e.,  $r>j$.
We assume that the correct secret can be obtained by a player only when he quits in the revelation round. 
From the property of geometric distribution, we have

\qquad\qquad $\gamma = \Pr[j=r|j\leq r]=\frac{\Pr[j=r]}{\Pr[j \leq r]}$, and

\qquad\qquad $1-\gamma = \Pr[j<r|j\leq r]=\frac{\Pr[j<r]}{\Pr[j \leq r]}.$

Then, $u_i(\sigma'_i,\overrightarrow{\sigma}_{-i})$ is given by 
\begin{eqnarray*}
&& U_i^{TN}\Pr[j=r] + U_i^{NN}\Pr[j<r]\\
& = & \gamma U_i^{TN}\Pr[j\leq r] + (1-\gamma)U_i^{NN}\Pr[j\leq r]\\
& = & (\gamma U_i^{TN} + (1-\gamma)U_i^{NN})\Pr[j\leq r]  \\ 
& < & U_i^{TT}.
\end{eqnarray*}
The last inequality follows from our assumption that 
$\gamma U_i^{TN} + (1-\gamma) U_i^{NN} < U_i^{TT}$.
In each sub-round, a player can send only a unique share (namely, the correct
share) as we have $u_i(\sigma'_i,\overrightarrow{\sigma}_{-i})<u_i(\overrightarrow{\sigma})$. So, our protocol follows strict Nash equilibrium.
\end{proof}
Note that the proof in this
case is a bit different from the proof in Theorem~\ref{nash1}, because the 
case $r < j$ does not occur here.

\section{Conclusion and Future Work}
Quantum secret sharing schemes either derived from QECC or teleportation do not succeed when we assume rational players. In this paper, 
for the first time we propose a new quantum rational secret sharing schemes
that is fair, correct and achieves strict Nash equilibrium. Under this
scheme, we propose two protocols, one with semi-offline dealer and the other with offline dealer. Semi-offline dealer appears twice 1) at the time of the share distribution, 2) at the end of the game. In the second protocol,
we make the dealer offline by giving the players auxiliary information related to the revelation round.

The only disadvantage of the protocols is that they require quantum memory. Also, quantum no cloning theorem~\cite{noclone} resists the players to copy their shares. So we let the dealer can prepare the copies of the secret as he knows the secret. Removing the requirement of the quantum memory is an interesting open problem. 

In the classical rational secret sharing schemes, there are other notions
of Nash equilibrium, such as computational Nash 
equilibrium under different adversarial models~\cite{FKN,Katz,OPRV,LS}.
Due to CSS code structure, no cloning and infinite range of the secret,
the scenario is completely
different in quantum setting and hence we consider only strict Nash
equilibrium. Extended analysis of our protocol or its variants for alternative
equilibrium models could be another potential future work.

\end{document}